\newtheorem{Theorem}{Theorem}[section]
\newtheorem{Proposition}[Theorem]{Proposition}
\newtheorem{Corollary}[Theorem]{Corollary}
\theoremstyle{definition}
\newtheorem{Definition}[Theorem]{Definition}
\newtheorem{Remark}[Theorem]{Remark}
\newcommand{\thref}[1]{Theorem \ref{#1}}
\newcommand{\prref}[1]{Proposition \ref{#1}}
\newcommand{\reref}[1]{Remark \ref{#1}}
\newcommand{\coref}[1]{Corollary \ref{#1}}
\newcommand{\seref}[1]{Section \ref{#1}}
\numberwithin{equation}{section}
\let\emptyset\varnothing
\newcommand{\Rset}{\mathbb{R}}
\newcommand{\Zset}{\mathbb{Z}}
\newcommand{\Nset}{\mathbb{N}}
\newcommand{\Pset}{\mathbb{P}}
\newcommand{\Sset}{\mathbb{S}}
\newcommand{\Tset}{\mathbb{T}}
\newcommand{\al}{{\alpha}}
\newcommand{\be}{{\beta}}
\newcommand{\la}{{\lambda}}
\newcommand{\ka}{{\kappa}}
\newcommand{\mut}{{\tilde{\mu}}}
\newcommand{\mub}{{\bar{\mu}}}
\newcommand{\nuh}{{\hat{\nu}}}
\newcommand{\beh}{{\hat{\be}}}
\newcommand{\bet}{{\tilde{\be}}}
\newcommand{\Ga}{{\Gamma}}
\newcommand{\ga}{{\gamma}}
\newcommand{\si}{{\sigma}}
\newcommand{\pd}{\partial}
 \newcommand{\bv}{{\mathbf v}}
 \newcommand{\bx}{{\mathbf x}}
\newcommand{\bz}{{\mathbf z}}
\newcommand{\bga}{\boldsymbol{\gamma}}
\newcommand{\bnu}{\boldsymbol{\nu}}
\newcommand{\bmu}{\boldsymbol{\mu}}
\newcommand{\bbe}{\boldsymbol{\beta}}
\newcommand{\cG}{{\mathcal G}}
\newcommand{\cL}{{\mathcal L}}
\newcommand{\cN}{{\mathcal N}}
\newcommand{\cM}{{\mathcal M}}
\newcommand{\cP}{{\mathcal P}}
\newcommand{\cR}{{\mathcal R}}
\newcommand{\cS}{{\mathcal S}}
\newcommand{\cSt}{\tilde{{\mathcal S}}}
\newcommand{\cH}{{\mathcal H}}
\newcommand{\cB}{{\mathcal B}}
\newcommand{\cI}{{\mathcal I}}
\newcommand{\td}{\mathfrak{t}_{d+1}}
\newcommand{\sld}{\mathfrak{sl}_{d+1}}
\newcommand{\fGd}{{\mathfrak G}_{d+1}}
\newcommand{\cDz}{{\mathcal D}_z}
\newcommand{\Ph}{\hat{P}}
\newcommand{\Rh}{\hat{R}}
\begin{document}

\title[Quantum system on the sphere and Racah operators]{The generic quantum superintegrable system on the sphere and Racah operators}
\keywords{quantum superintegrable systems, symmetries, commuting operators, classical orthogonal polynomials, bispectrality}
\subjclass[2010]{81R12, 17B81, 33C80}

\date{June 12, 2017}

\author{Plamen Iliev}
\address{School of Mathematics, Georgia Institute of Technology, Atlanta, GA 30332--0160, USA}
\email{iliev@math.gatech.edu}
\thanks{The author is partially supported by Simons Foundation Grant \#280940.}

\begin{abstract}
We consider the generic quantum superintegrable system on the $d$-sphere with potential 
$V(y)=\sum_{k=1}^{d+1}\frac{b_k}{y_k^2}$, where  $b_k$ are parameters. Appropriately normalized, the symmetry operators for the Hamiltonian define a representation of the Kohno-Drinfeld Lie algebra on the space of polynomials orthogonal with respect to the Dirichlet distribution. The Gaudin subalgebras generated by Jucys-Murphy elements are diagonalized by families of Jacobi polynomials in $d$ variables on the simplex. We define a set of generators for the symmetry algebra and we prove that their action on the Jacobi polynomials is represented by the multivariable Racah operators introduced in \cite{GI}. 
The constructions also yield a new Lie-theoretic interpretation of the bispectral property for Tratnik's multivariable Racah polynomials.
\end{abstract}
\maketitle
 
\section{Introduction}\label{se1}
Let
$\Sset^{d}=\{y\in\Rset^{d+1}:y_1^2+\cdots+y_{d+1}^2=1\}$
denote the $d$-dimensional sphere,
and let $\cH$ denote the quantum Hamiltonian on $\Sset^{d}$ defined by
\begin{equation}\label{1.1}
\cH=\sum_{1\leq i<j \leq d+1}\left(y_i{\pd_{y_j}}-y_j{\pd_{y_i}}\right)^2+\sum_{k=1}^{d+1}\frac{b_k}{y_k^2}, \qquad\text{ where }
\qquad\pd_{y_j}=\frac{\pd}{\pd {y_j}},
\end{equation}
and $\{b_k\}_{k=1,\dots,d+1}$ are parameters. It is easy to check that the operators 
\begin{equation}\label{1.2}
L_{i,j}=\left(y_i{\pd_{y_j}}-y_j{\pd_{y_i}}\right)^2+\frac{b_iy_j^2}{y_i^2}+\frac{b_jy_i^2}{y_j^2}
\end{equation}
commute with $\cH$ and therefore generate a symmetry algebra. 
It is not hard to see that the system is completely integrable since it admits $d$ algebraically independent and mutually commuting operators $\cI_1=\cH$, $\cI_2$,\dots, $\cI_d$, where $\cI_j=\sum_{k=j+1}^{d+1}L_{j,k}$ for $j\geq 2$ (see \reref{re2.2}).  This system has been extensively studied in the literature \cite{KMP1,KMP2,KMT,MPW,MT} as an important example of a second-order superintegrable system, possessing $(2d-1)$ second-order algebraically independent symmetries. We refer to this system as the {\em generic superintegrable system on the sphere}, following the terminology used in dimensions $2$ and $3$.

In a series of papers \cite{KMP1,KMP2}, Kalnins, Miller and Post described the irreducible representations of the symmetry algebra of the Hamiltonian $\cH$ in dimensions $d=2$ and $d=3$ and discovered an interesting link to Racah polynomials and their two-variable extensions proposed by Tratnik \cite{Tr}, respectively. For the $3$-sphere, they noticed that the action of appropriate linear combinations of the generators $L_{i,j}$ of the symmetry algebra can be expressed in terms of the two-dimensional Racah operators constructed in \cite{GI} and raised the natural question whether this phenomenon extends in higher dimensions. The $3$-dimensional case was further analyzed recently by Post \cite{P}, building on the work by Genest and Vinet \cite{GV}.

The goal of the present paper is to extend the connection between the symmetry algebra for the Hamiltonian \eqref{1.1} and the Racah operators defined in \cite{GI} in arbitrary dimension, by generalizing the constructions in \cite{I1}. There, we used a representation of the Lie algebra $\sld$ together with two Cartan subalgebras, which are now replaced by  the Kohno-Drinfeld Lie algebra together with two Gaudin subalgebras. First, we note that, appropriately normalized, the symmetry algebra for the Hamiltonian $\cH$ defines a representation of the Kohno-Drinfeld Lie algebra on the space of polynomials orthogonal with respect to the Dirichlet distribution \cite{KMT}. The Gaudin subalgebras generated by Jucys-Murphy elements are then diagonalized by families of Jacobi polynomials in $d$ variables on the simplex. We fix one such Gaudin subalgebra $\fGd$ corresponding to the standard basis $P_{\nu}$ of polynomials and we define a second Gaudin subalgebra $\fGd^{\tau}$ and a second basis $P^{\tau}_{\nu}$ of polynomials by applying appropriately the cyclic permutation $\tau=(1,2,\dots,d+1)$ to $\fGd$ and $P_{\nu}$, respectively. We prove that the action of the Gaudin algebras $\fGd$ and $\fGd^{\tau}$ on each of the bases $\{P_{\nu}\}$ and $\{P^{\tau}_{\nu}\}$ can be written in terms of the $(d-1)$-dimensional Racah algebras of operators and variables defined in \cite{GI}. In particular, if we fix the basis $\{P_{\nu}\}$, then we obtain 
explicit formulas for the action of the operators in the algebras $\fGd$, $\fGd^{\tau}$ and $\fGd^{\tau^{-1}}=\tau^{-1}\circ\fGd$. In dimensions $d=2$ and $d=3$, linear combinations of the operators in $\fGd$, $\fGd^{\tau}$ and $\fGd^{\tau^{-1}}$ lead to the generators $L_{i,j}$ of the symmetry algebra and correspond to the formulas obtained in \cite{KMP1,KMP2,P}. In dimension $d>3$, we show that the operators in $\fGd$, $\fGd^{\tau}$ and $\fGd^{\tau^{-1}}$ still generate the full symmetry algebra, but we need to use nonlinear relations. An important ingredient in the proof stems from a result established recently in \cite{IX} which allows to express the transition matrix from $P_{\nu}$ to $P^{\tau}_{\nu}$ in terms of the Racah polynomials in $(d-1)$ variables defined by Tratnik \cite{Tr}. 
As an immediate corollary of these constructions, we also obtain a new Lie-theoretic interpretation of the bispectral property established in \cite{GI}.

It is perhaps useful to stress that the complexity increases exponentially as we move to higher dimensions. For instance, for the $4$-sphere, the corresponding Racah algebra contains $3$ mutually commuting difference operators, one of them having $27$ (rather complex) coefficients. If we go to dimension $d=5$, we will have an operator with $81$ coefficients, etc.

The paper is organized as follows. In the next section we normalize the symmetry operators, so that they act naturally on the space of polynomials and we exhibit a set of $(2d-1)$ operators which generate the symmetry algebra. In \seref{se3}, we explain the relation to Gaudin subalgebras and Jacobi polynomials. In \seref{se4}, we provide a short introduction to the multivariable Racah polynomials and operators. In \seref{se5}, we prove the main results and discuss their connection to the bispectral problem.

\section{Symmetry algebra}\label{se2}
First, we normalize the symmetry operators for the Hamiltonian \eqref{1.1}. 
If we consider nonnegative coordinates $y_i\geq 0$ and set $z_i=y_i^2$, then the operators $L_{i,j}$ take the form
\begin{equation*}
L_{i,j}=4z_iz_j\left({\pd_{z_j}}-{\pd_{z_i}}\right)^2+2(z_j-z_i)\left({\pd_{z_i}}-{\pd_{z_j}}\right)+\frac{b_iz_j}{z_i}+\frac{b_jz_i}{z_j},
\end{equation*}
on the simplex $\{z\in\Rset^{d+1}:z_1+\cdots+z_{d+1}=1\text{ and } z_i\geq 0\}$.
Furthermore, if we consider the gauge factor
$$\cG_{\al}(z)=\prod_{j=1}^{d+1}z_j^{\al_j}$$
then a straightforward computation shows that 
\begin{equation*}
\begin{split}
\cG_{\al}(z)L_{i,j}\circ \cG^{-1}_{\al}(z)=&4z_iz_j\left({\pd_{z_j}}-{\pd_{z_i}}\right)^2+4\left((2\al_j-1/2)z_i-(2\al_i-1/2)z_j\right)\left({\pd_{z_i}}-{\pd_{z_j}}\right)\\
&+\left((2\al_i+1/2)^2+b_i-1/4\right)\frac{z_j}{z_i}+\left((2\al_j+1/2)^2+b_j-1/4\right)\frac{z_i}{z_j}\\
&+2[(\al_i+\al_j)-4\al_i\al_j].
\end{split}
\end{equation*}
Thus, if we set $\ga_i=-(2\al_i+1/2)$ the last equation can be rewritten as
\begin{equation*}
\begin{split}
\cG_{\al}(z)L_{i,j}\circ \cG^{-1}_{\al}(z)=&4z_iz_j\left({\pd_{z_j}}-{\pd_{z_i}}\right)^2+4\left[-(\ga_j+1)z_i+(\ga_i+1)z_j\right] \left({\pd_{z_i}}-{\pd_{z_j}}\right)\\
&+\left(\ga_i^2+b_i-1/4\right)\frac{z_j}{z_i}+\left(\ga_j^2+b_j-1/4\right)\frac{z_i}{z_j}\\
&-2[(\ga_i+1)(\ga_j+1)-1/4].
\end{split}
\end{equation*}
In particular, if we replace the parameters $\{b_i\}_{i=1,\dots,d+1}$ with parameters $\{\ga_i\}_{i=1,\dots,d+1}$, related by 
$$b_i=\frac{1}{4}-\ga_i^2,$$
then
\begin{equation*}
\cG_{\al}(z)L_{i,j}\circ \cG^{-1}_{\al}(z)=4t_{i,j}-2[(\ga_i+1)(\ga_j+1)-1/4],
\end{equation*}
where 
$$t_{i,j}=z_iz_j\left({\pd_{z_j}}-{\pd_{z_i}}\right)^2+\left[(\ga_i+1)z_j-(\ga_j+1)z_i\right] \left({\pd_{z_i}}-{\pd_{z_j}}\right).$$
Thus, up to a gauge transformation and unessential constant factors, we can replace the symmetry operators by the operators $t_{i,j}$. Finally, if we choose coordinates $x_1=z_1$, $x_2=z_2$, \dots, $x_d=z_d$, the operators $t_{i,j}$ take the form
\begin{equation}\label{2.1}
\begin{split}
&t_{i,j}=x_ix_j(\pd_{x_i}-\pd_{x_j})^2+[(\ga_i+1)x_j-(\ga_j+1)x_i](\pd_{x_i}-\pd_{x_j}), \\
&\hskip7cm  \text{ if }i\neq j\in\{1,\dots,d\},\\
&t_{j,d+1}=t_{d+1,j}=x_j(1-|x|)\pd_{x_j}^2+[(\ga_j+1)(1-|x|)-(\ga_{d+1}+1)x_j]\pd_{x_j},\\
&\hskip7cm  \text{ if }j\in\{1,\dots,d\},
\end{split}
\end{equation}
where $|x|=x_1+x_2+\cdots+x_d$.  The above computations are similar to the ones in \cite{KMT}, where the starting point was the second-order partial differential operator $\sum_{1\leq i<j\leq d+1}t_{i,j}$ for the Lauricalla functions.

\begin{Definition}\label{de2.1}
We denote by  $\td$ the associative algebra generated by the operators $t_{i,j}$, $i\neq j\in \{1,\dots,d+1\}$ defined in \eqref{2.1}.
\end{Definition}
From now on, we focus on the algebra $\td$ and the operators $t_{i,j}$ defined in \eqref{2.1}. 
As we noted above, up to a gauge transformation and unessential constant terms, $\td$ coincides with the symmetry algebra for the quantum Hamiltonian $\cH$ in \eqref{1.1}. 

\begin{Remark}\label{re2.2}
It easy to check that the operators $t_{i,j}$ satisfy the following commutation relations
\begin{subequations}\label{2.2}
\begin{align}
[t_{i,j},t_{k,l}]&=0, \text{ if }i,j,k,l \text{ are distinct,}\label{2.2a}\\
[t_{i,j},t_{i,k}+t_{j,k}]&=0, \text{ if }i,j,k \text{ are distinct.} \label{2.2b}
\end{align}
\end{subequations}
Recall that the Kohno-Drinfeld Lie algebra \cite{Dr,Ko} is the quotient of the free Lie algebra on generators $t_{i,j}$, by the ideal generated by the relations in \eqref{2.2}. Thus, the symmetry operators $t_{i,j}$ in \eqref{2.1} define a representation of the Kohno-Drinfeld Lie algebra. In particular, if $S_{d+1}$ is the symmetric group consisting of all permutations of $d+1$ symbols, then for every permutation $\sigma\in S_{d+1}$, the Jucys-Murphy elements 
$$t_{\si_1,\si_2}, t_{\si_1,\si_3}+t_{\si_2,\si_3}, t_{\si_1,\si_4}+t_{\si_2,\si_4}+t_{\si_3,\si_4},\dots, \sum_{j=1}^{d}t_{\si_j,\si_{d+1}}$$ 
commute with each other and generate a commutative subalgebra of $\td$. We will refer to this subalgebra as a Gaudin subalgebra of $\td$, following the convention in the literature \cite{AFV,Fr}.
\end{Remark}

The operators $t_{i,j}$ defined in \eqref{2.1} satisfy other rather complicated relations. In dimensions $d=2$ and $d=3$, the structure equations can be found in the works of Kalnins, Miller and Post \cite{KMP1,KMP2}. Note that when $d$ increases, the dimension of the space of second-order symmetries $\binom{d+1}{2}$ grows quadratically in $d$, while the number of the algebraically independent symmetries $2d-1$  is a linear function of $d$. Thus, for many practical purposes, it is crucial to find a smaller explicit set of generators for the symmetry algebra. It turns out that a single fourth-order relation can be used to reduce the set of generators. If $i,j,k,l$ are distinct indices, one can show that 
\begin{align}
&(1-\ga_k^2)(1-\ga_l^2)t_{i,j}=
\left \{ [t_{j,k},t_{k,l}],[t_{i,k},t_{k,l}] \right\}-2\left\{t_{k,l},t_{i,k}t_{j,l}\right\}\nonumber\\
&\qquad-\left\{t_{k,l},[t_{i,k},[t_{j,k},t_{k,l}]] \right\}+(1+\ga_k)(1+\ga_l)\left[t_{i,k},[t_{k,l},t_{j,l}]\right]\nonumber\\
&\qquad+(1+\ga_j)(1+\ga_l)\left(\left\{t_{i,k},t_{k,l}\right\}-2\ga_k t_{i,k}-(1+\ga_i)(1+\ga_k)t_{k,l} \right)\nonumber\\
&\qquad+(1-\ga_l^2)\left\{t_{i,k},t_{j,k}\right\}+(1-\ga_k^2)\left\{t_{i,l},t_{j,l}\right\}+(1+\ga_i)(1+\ga_k) \left\{t_{j,l},t_{k,l}\right\}\nonumber\\
&\qquad-4t_{j,k}t_{i,l}+2(-1+\ga_k+\ga_l+\ga_k\ga_l)t_{j,l}t_{i,k}\nonumber\\
&\qquad+(1+\ga_i)(1+\ga_l)(1-\ga_k+\ga_l+\ga_k\ga_l)t_{j,k}\nonumber\\
&\qquad-2(1+\ga_i)(1+\ga_k)\ga_lt_{j,l}+(1+\ga_j)(1+\ga_k)(1+\ga_k-\ga_l+\ga_k\ga_l)t_{i,l}.\label{2.3}
\end{align}
Here, as usual, $\{A,B\}=AB+BA$ denotes the anticommutator of the operators $A$ and $B$.
Note that the right-hand side of the last equation is generated by the elements  $t_{i,k}$, $t_{i,l}$, $t_{j,k}$, $t_{j,l}$, $t_{k,l}$. We assume throughout the paper that $\ga_s\neq \pm1$ and therefore we deduce from the above formula that $t_{i,j}$ is generated by these elements:
\begin{equation}\label{2.4}
t_{i,j}\in\Rset \langle t_{i,k}, t_{i,l}, t_{j,k}, t_{j,l}, t_{k,l}\rangle.
\end{equation}
As an immediate corollary, we obtain an explicit set of $(2d-1)$ generators for $\td$.
\begin{Proposition}\label{pr2.3}
The algebra $\td$ is generated by the set 
\begin{equation}\label{2.5}
\cS=\{t_{1,j}:j=2,3,\dots,d+1\}\cup \{t_{i,d+1}:i=2,3,\dots,d\}.
\end{equation}
\end{Proposition}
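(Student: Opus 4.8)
The plan is to prove the reverse inclusion that matters, namely that every generator $t_{i,j}$ of $\td$ lies in the subalgebra $\Rset\langle\cS\rangle$ generated by $\cS$; together with the trivial inclusion $\Rset\langle\cS\rangle\subseteq\td$ and \deref{de2.1} this gives $\td=\Rset\langle\cS\rangle$. The first step is purely combinatorial: I would observe that $\cS$ already contains every $t_{i,j}$ whose index set meets $\{1,d+1\}$. Indeed $\{t_{1,j}:j=2,\dots,d+1\}$ runs over all pairs containing the index $1$, and adjoining $t_{1,d+1}$ to $\{t_{i,d+1}:i=2,\dots,d\}$ gives all pairs containing the index $d+1$ (recall $t_{i,j}=t_{j,i}$). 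Hence the only generators still to be accounted for are the $t_{i,j}$ with $2\le i<j\le d$; in particular the statement is vacuous for $d\le 2$, where $\cS$ is already the full list of $t_{i,j}$.

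For the remaining case, fix $i<j$ with $i,j\in\{2,\dots,d\}$ and apply the fourth-order identity \eqref{2.3} to the four-tuple of indices $(i,j,k,l)=(i,j,1,d+1)$. These four indices are genuinely distinct precisely because $i,j\notin\{1,d+1\}$, so \eqref{2.3} is applicable, and under the standing hypothesis $\ga_s\neq\pm1$ the coefficient $(1-\ga_1^2)(1-\ga_{d+1}^2)$ multiplying $t_{i,j}$ on the left is nonzero. Solving for $t_{i,j}$ — which is exactly the content of \eqref{2.4} with this choice of $k,l$ — yields
\[
t_{i,j}\in\Rset\langle t_{i,1},\,t_{i,d+1},\,t_{j,1},\,t_{j,d+1},\,t_{1,d+1}\rangle .
\]
Each of the five operators on the right has $1$ or $d+1$ among its indices, hence lies in $\cS$ by the first step, so $t_{i,j}\in\Rset\langle\cS\rangle$. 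This exhausts all generators of $\td$ and finishes the proof.

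I do not expect a genuine obstacle here: once the relation \eqref{2.3} has been established, the proposition is a short bookkeeping consequence, and the real work lives in verifying that identity. The one point deserving a word of care is the genericity assumption $\ga_s\neq\pm1$, which is what allows division by $(1-\ga_1^2)(1-\ga_{d+1}^2)$; for special parameter values one would need either a different auxiliary pair $(k,l)$ or a separate argument, but under the hypotheses of the paper no such extra analysis is required.
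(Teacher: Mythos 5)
Your proposal is correct and follows essentially the same route as the paper: the paper's proof likewise observes that the case $d=2$ is immediate and that for $1<i<j<d+1$ one generates $t_{i,j}$ from $t_{1,i}$, $t_{1,j}$, $t_{i,d+1}$, $t_{j,d+1}$, $t_{1,d+1}$ via \eqref{2.3}--\eqref{2.4} with $(k,l)=(1,d+1)$. Your additional remarks on the nonvanishing of $(1-\ga_1^2)(1-\ga_{d+1}^2)$ and the symmetry $t_{i,j}=t_{j,i}$ simply make explicit what the paper leaves implicit.
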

\begin{proof}
The statement is obvious when $d=2$. When $d\geq 3$ and $1<i<j<d+1$ we can generate $t_{i,j}$ using the elements $t_{1,i}$, $t_{1,j}$, $t_{i,d+1}$, $t_{j,d+1}$, $t_{1,d+1}$ from $\cS$.
\end{proof}

\section{Gaudin subalgebras of  $\td$ and Jacobi polynomials}\label{se3}
For a vector $v=(v_1,\dots,v_s)$ we denote by $|v|=v_1+\cdots+v_s$ the sum of its components. 
Suppose now that $\ga=(\ga_1,\dots,\ga_{d+1})$ is such that $\ga_j>-1$ for all $j\in\{1,\dots,d+1\}$. For $x=(x_1,\dots,x_d)$ let
\begin{equation}\label{3.1}
W_{\ga} (x)= \frac{\Ga(|\ga|+d+1)}{\prod_{j=1}^{d+1}\Ga(\ga_j+1)}\, x_1^{\ga_1} \cdots x_d^{\ga_d} (1-|x|)^{\ga_{d+1}},  
\end{equation}
denote the Dirichlet distribution on the simplex 
$$\Tset^{d}=\{x\in\Rset^{d}:  x_i\geq 0\text{ and }|x|\leq 1\}.$$
On the space $\Rset[x]$ of polynomials of $x_1,x_2,\dots,x_d$, define an inner product by
\begin{equation}\label{3.2}
\langle f, g\rangle =\int_{\Tset^d}f(x)g(x) W_{\ga} (x)\, dx.
\end{equation}
Let $\Pset_n$ be the space of polynomials of total degree at most $n$ with the convention $\Pset_{-1}=\{0\}$. The algebra $\td$ defined in the previous section has a natural action on the space of orthogonal polynomials with respect to the inner product \eqref{3.2}.
\begin{Proposition}\label{pr3.1}
Let $i\neq j\in \{1,\dots,d+1\}$. Then
\begin{enumerate}[{\rm (i)}]
\item The operator $t_{i,j}$ is self-adjoint with respect to the inner product \eqref{3.2}.
\item For $n\in\Nset_0$ we have $t_{i,j}(\Pset_n)\subset \Pset_n$, i.e. $t_{i,j}:\Pset_n\to\Pset_n$.
\end{enumerate}
\end{Proposition}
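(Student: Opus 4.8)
The plan is to handle the two parts separately. Part (ii) is immediate from the explicit form \eqref{2.1}: since $t_{i,j}=t_{j,i}$ we may assume $i<j$, and then $t_{i,j}$ is a second-order operator whose second-order coefficient is a homogeneous polynomial of degree $2$ (namely $x_ix_j$, resp.\ $x_j(1-|x|)$) and whose first-order coefficients have degree $\le 1$. Applying $t_{i,j}$ to a monomial $x^\al$, the second-derivative part drops the degree by $2$ and then multiplies by a degree-$2$ polynomial, while the first-derivative part drops the degree by $1$ and multiplies by a degree-$\le1$ polynomial; in either case the total degree does not increase, so $t_{i,j}(\Pset_n)\subset\Pset_n$.

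For part (i), the main idea is to put each $t_{i,j}$ into divergence (Sturm--Liouville) form relative to $W_\ga$. Set $x_{d+1}=1-|x|$, and for $i<j$ let $D=\pd_{x_i}-\pd_{x_j}$ and $c=x_ix_j$ if $j\le d$, while $D=\pd_{x_j}$ and $c=x_jx_{d+1}$ if $j=d+1$. The key identity is
\[
t_{i,j}f=\frac{1}{W_\ga}\,D\!\left(W_\ga\,c\,Df\right)\qquad\text{for all }f\in\Rset[x].
\]
This is verified by a direct Leibniz-rule expansion: using $\pd_{x_k}x_{d+1}=-1$ one computes $D(\log W_\ga)=\ga_i/x_i-\ga_j/x_j$ in the first case and $D(\log W_\ga)=\ga_j/x_j-\ga_{d+1}/x_{d+1}$ in the second, and substituting this together with $Dc$ reproduces exactly the first-order coefficients in \eqref{2.1}.

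Granting the identity, for $f,g\in\Rset[x]$ the operator $D$ is the directional derivative along a constant vector field $v$ (with $v=e_i-e_j$, resp.\ $v=e_j$), so by the divergence theorem
\[
\langle t_{i,j}f,g\rangle=\int_{\pd\Tset^d}W_\ga\,c\,(Df)\,g\,(v\cdot n)\,dS-\int_{\Tset^d}W_\ga\,c\,(Df)(Dg)\,dx.
\]
The boundary term vanishes: the field $v$ is transversal to exactly two facets of $\Tset^d$ --- the facets $\{x_i=0\}$ and $\{x_j=0\}$ in the first case, and $\{x_j=0\}$ and $\{|x|=1\}$ in the second --- and on each of these $W_\ga\,c$ carries a factor vanishing to positive order ($x_i^{\ga_i+1}$, $x_j^{\ga_j+1}$, resp.\ $(1-|x|)^{\ga_{d+1}+1}$, all exponents being $>0$ since $\ga_k>-1$); on every other facet $v$ is tangent, so $v\cdot n=0$. (To make the divergence-theorem step rigorous when some $\ga_k\in(-1,0)$, one exhausts $\Tset^d$ by slightly smaller simplices and passes to the limit, using integrability of $W_\ga$.) Hence
\[
\langle t_{i,j}f,g\rangle=-\int_{\Tset^d}W_\ga\,c\,(Df)(Dg)\,dx,
\]
which is visibly symmetric in $f$ and $g$; since the inner product \eqref{3.2} is real-valued, this yields $\langle t_{i,j}f,g\rangle=\langle f,t_{i,j}g\rangle$.

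The only delicate point is the boundary analysis in the preceding paragraph: because $W_\ga$ is genuinely singular on the facets $\{x_k=0\}$ with $\ga_k\in(-1,0)$, one cannot simply assert that $W_\ga$ vanishes on $\pd\Tset^d$; the essential observation is that the single directional derivative appearing in the divergence form of $t_{i,j}$ is tangent to precisely those facets on which $W_\ga\,c$ need not vanish. Everything else is a routine computation.
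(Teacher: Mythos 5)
Your proof is correct and follows essentially the same route as the paper: the same Sturm--Liouville factorization $t_{i,j}=W_{\ga}^{-1}\,D\,(W_{\ga}\,c\,D\,\cdot)$ followed by integration by parts to reach the manifestly symmetric form $-\int_{\Tset^d} c\,(Df)(Dg)\,W_{\ga}\,dx$, with part (ii) read off from the degrees of the coefficients. The only differences are that you spell out the vanishing of the boundary terms (which the paper leaves implicit in ``integrating by parts''), and a harmless index slip in the case $j=d+1$, where $D$ and $c$ should read $\pd_{x_i}$ and $x_i(1-|x|)$, consistent with your subsequent computation.
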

\begin{proof}
If $i\neq j\in \{1,\dots,d\}$ it is easy to see that 
$$t_{i,j}=\frac{1}{W_{\ga} (x)}\,  (\pd_{x_i}-\pd_{x_j})\,x_ix_j W_{\ga} (x) (\pd_{x_i}-\pd_{x_j}).$$
Using this and integrating by parts, it follows that
$$\langle t_{i,j}f, g\rangle =-\int_{\Tset^d}x_ix_j [ (\pd_{x_i}-\pd_{x_j}) f(x)] \, [ (\pd_{x_i}-\pd_{x_j}) g(x)]\, W_{\ga} (x)\, dx.$$
Since the right-hand side is symmetric in $f$ and $g$, we deduce that $\langle t_{i,j}f, g\rangle =\langle f,  t_{i,j}g\rangle$. 
Similarly, using the representation 
$$t_{i,d+1}=\frac{1}{W_{\ga} (x)}\,  \pd_{x_i}\,x_i (1-|x|) W_{\ga} (x) \pd_{x_i}$$
it follows that $t_{i,d+1}$ is self-adjoint with respect to the inner product \eqref{3.2} thus completing the proof of (i). The proof of (ii) is straightforward.
\end{proof}

For $n\in\Nset_0$, let $\cP^{\ga}_n=\Pset_n\ominus\Pset_{n-1}$ denote the space of polynomials of total degree $n$, orthogonal to all polynomials of total degree at most $n-1$ with respect to the inner product \eqref{3.2}. Then, as an immediate consequence of \prref{pr3.1}, we obtain the following corollary.
\begin{Corollary}\label{co3.2}
For $n\in\Nset_0$ and $i\neq j\in\{1,\dots,d+1\}$ we have 
$$t_{i,j}(\cP^{\ga}_n)\subset \cP^{\ga}_n,$$
i.e. we can restrict the representation of $\td$ onto the space $\cP^{\ga}_n$.
\end{Corollary}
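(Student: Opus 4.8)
The plan is to combine the two parts of \prref{pr3.1} in the standard way: invariance of the filtration $\Pset_{n-1}\subset\Pset_n$ together with self-adjointness automatically forces invariance of the orthogonal graded piece $\cP^{\ga}_n=\Pset_n\ominus\Pset_{n-1}$. Fix $n\in\Nset_0$ and $i\neq j\in\{1,\dots,d+1\}$, and let $f\in\cP^{\ga}_n$.

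First I would invoke part (ii) of \prref{pr3.1} to record that $t_{i,j}f\in\Pset_n$, since $f\in\Pset_n$. It then remains only to show that $t_{i,j}f$ is orthogonal to every element of $\Pset_{n-1}$ with respect to \eqref{3.2}. To do this, take an arbitrary $g\in\Pset_{n-1}$ and use the self-adjointness from part (i) to write $\langle t_{i,j}f,g\rangle=\langle f,t_{i,j}g\rangle$. Applying part (ii) with $n$ replaced by $n-1$ gives $t_{i,j}g\in\Pset_{n-1}$ (the degenerate case $n=0$ being absorbed by the convention $\Pset_{-1}=\{0\}$), and since by definition $f\in\cP^{\ga}_n$ is orthogonal to all of $\Pset_{n-1}$, we conclude $\langle f,t_{i,j}g\rangle=0$. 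Hence $\langle t_{i,j}f,g\rangle=0$ for all $g\in\Pset_{n-1}$, so $t_{i,j}f\in\Pset_n\ominus\Pset_{n-1}=\cP^{\ga}_n$. Because $\td$ is generated by the operators $t_{i,j}$ and each of them preserves $\cP^{\ga}_n$, the whole representation of $\td$ restricts to $\cP^{\ga}_n$, which is the remaining assertion of the corollary.

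There is essentially no obstacle here — the statement is, as advertised, an immediate corollary. The only point requiring a moment's care is to invoke part (ii) of \prref{pr3.1} at both degrees $n$ and $n-1$ (and to note that the $\Pset_{-1}=\{0\}$ convention handles $n=0$) so that the pairing argument $\langle t_{i,j}f,g\rangle=\langle f,t_{i,j}g\rangle=0$ actually closes.
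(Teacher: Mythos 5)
Your argument is correct and is exactly the standard reasoning the paper leaves implicit when it calls the corollary an immediate consequence of \prref{pr3.1}: self-adjointness plus preservation of the degree filtration (applied at both degrees $n$ and $n-1$) forces invariance of $\Pset_n\ominus\Pset_{n-1}$. Nothing further is needed.
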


The Gaudin subalgebras generated by Jucys-Murphy elements in $\td$ define mutually orthogonal bases of orthogonal polynomials which are products of ${}_2F_1$ hypergeometric functions. 
In order to write explicit formulas, we will introduce some notations. For a vector $v=(v_1,\dots, v_{s})$ we define 
\begin{equation*}
    \bv_j = (v_1, \dots, v_j) \quad \text{ and }\quad \bv^j = (v_j, \ldots, v_s), 
\end{equation*}
with the convention that $\bv_0 = \emptyset$ and $\bv^{s+1} =  \emptyset$. We also use standard multi-index notation throughout the paper. For instance, if $\nu=(\nu_1,\dots,\nu_d)\in\Nset_0^d$ then
$$x^{\nu}=x_1^{\nu_1}\cdots x_d^{\nu_d} \qquad\text{and}\qquad \nu!=\nu_1!\cdots \nu_d!.$$
We will use the Jacobi polynomial $p_n^{(\al,\be)}$ normalized as follows
\begin{equation*}
p_n^{(\al,\be)}(t) =  \frac{(\al+1)_n}{(\be+1)_n} {}_2F_1 \left (\begin{matrix} -n, n+\al+\be+1 \\ \al+1 \end{matrix}; \frac{1-t}2 \right).
\end{equation*}
For $\nu \in \Nset_0^d$  we define
\begin{equation}\label{3.3}
a_j=a_j(\ga,\nu)=|\bga^{j+1}| + 2 |\bnu^{j+1}| + d-j, \qquad 1 \le j \le d.
\end{equation} 

With these notations, an orthogonal basis of $\Rset[x]$ for the inner product \eqref{3.2}
is given by
\begin{equation}\label{3.4}
P_{\nu}(x;\ga) = \prod_{k=1}^d \left(1-|\bx_{k-1}| \right)^{\nu_k} 
               p_{\nu_k}^{(a_k,\ga_k)}\left (\frac{2x_k}{1-|\bx_{k-1}|} -1\right), 
\end{equation}
with norms
\begin{equation}\label{3.5}
||P_{\nu}||^2 =  \langle P_\nu, P_\nu \rangle=
   \frac{1}{(|\ga|+d+1)_{2|\nu|}}
     \prod_{j=1}^d \frac{(\ga_j+a_j+\nu_j+1)_{\nu_j}(a_j+1)_{\nu_j}\nu_j!} {(\ga_j +1)_{\nu_j} },
\end{equation}
see \cite[p. 150]{DX}.

The basis $\{P_\nu(x;\ga)\}_{\nu\in\Nset_0^d}$ can be characterized by the fact that it diagonalizes the Gaudin subalgebra $\fGd$ of $\td$ defined by
\begin{equation}\label{3.6}
\fGd=\Rset\left\langle t_{d,d+1}, t_{d-1,d}+t_{d-1,d+1}, t_{d-2,d-1}+t_{d-2,d}+t_{d-2,d+1},\dots,\sum_{j=2}^{d+1}t_{1,j}\right\rangle.
\end{equation}
This can be deduced from the results in \cite{KMT} and \cite[Section 5.3]{GI}, but we provide a short direct proof below. 
Define operators 
\begin{equation}\label{3.7}
\cM_{j,d}(x)=\sum_{j\leq k<l\leq d+1}t_{k,l}, \qquad \text{ for }\qquad j=1,2,\dots,d,
\end{equation}
and note that 
$$\fGd=\Rset\langle\cM_{1,d}(x),\dots,\cM_{d,d}(x)\rangle.$$ 
In the rest of the paper, we will write simply $\cM_j$ for $\cM_{j,d}(x)$ when the variables $x$ and the dimension $d$ are fixed. 
With these notations, the following spectral equations hold.
\begin{Proposition}\label{pr3.3}
For $j=1,\dots,d$ we have
\begin{equation}\label{3.8}
\cM_{j}P_\nu (x;\ga)=-|\bnu^{j}|(|\bnu^{j}|+|\bga^{j}|+d+1-j)P_\nu (x;\ga).
\end{equation}
\end{Proposition}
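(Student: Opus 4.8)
The plan is to deduce \eqref{3.8} from the single total operator $\cM_1=\cM_{1,d}$ together with a scaling argument that reduces $\cM_j$ to a total operator in $d-j+1$ variables. First I would handle $j=1$. Expanding each $t_{k,l}$ using \eqref{2.1} and collecting the coefficients of $\pd_{x_k}^2$, $\pd_{x_k}\pd_{x_l}$ and $\pd_{x_k}$ (using $(|x|-x_k)+(1-|x|)=1-x_k$ to simplify the pure second-order coefficients, and simplifying the first-order part) should give
\begin{equation*}
\cM_1=\sum_{k=1}^{d}x_k(1-x_k)\pd_{x_k}^2-2\sum_{1\le k<l\le d}x_kx_l\pd_{x_k}\pd_{x_l}+\sum_{k=1}^{d}\bigl[(\ga_k+1)-(|\ga|+d+1)x_k\bigr]\pd_{x_k},
\end{equation*}
the classical second-order operator for the Jacobi polynomials on $\Tset^{d}$ (cf. \cite[Ch.~5]{DX}). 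To read off its action on $\cP^{\ga}_n$, set $E=\sum_k x_k\pd_{x_k}$ and observe that $\cM_1$ has no degree-raising part and that its degree-preserving part is $E-E^2-(|\ga|+d+1)E$, which acts on homogeneous polynomials of degree $n$ as the scalar $-n(n+|\ga|+d)$. By \prref{pr3.1} the operator $\cM_1$ is self-adjoint and preserves each $\Pset_n$, hence it preserves each $\cP^{\ga}_n$, and comparing top-degree homogeneous parts in $\cM_1P=\mu P$ for $0\ne P\in\cP^{\ga}_n$ forces $\mu=-n(n+|\ga|+d)$. Since $P_\nu(x;\ga)\in\cP^{\ga}_{|\nu|}$ (the $P_\mu$ being an orthogonal basis with $\deg P_\mu=|\mu|$), this is \eqref{3.8} for $j=1$.

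For $j>1$ I would set $c=1-|\bx_{j-1}|$ and $u_m=x_m/c$ for $m=j,\dots,d$, write $\bu=(u_j,\dots,u_d)$, and note that $c$ is free of $x_j,\dots,x_d$ and that $1-|x|=c(1-|\bu|)$. A direct inspection of \eqref{2.1} shows that every $t_{k,l}$ with $j\le k<l\le d+1$ (including the boundary operators $t_{k,d+1}$) is invariant under $x_m\mapsto u_m$: the homogeneity of its coefficients in $x_j,\dots,x_d,1-|x|$ exactly absorbs the factors $c$ produced by $\pd_{x_m}=c^{-1}\pd_{u_m}$. Hence, acting on functions of $\bu$, $\cM_j$ coincides with the total operator $\sum_{j\le k<l\le d+1}t_{k,l}$ viewed on the smaller simplex $\Tset^{\,d-j+1}$ with parameters $\bga^j=(\ga_j,\dots,\ga_{d+1})$. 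On the other side, \eqref{3.4} together with $1-|\bx_{k-1}|=c\bigl(1-(u_j+\cdots+u_{k-1})\bigr)$ for $k\ge j$, and the fact that the exponents $a_k$ are unchanged under the index shift $(d,k)\mapsto(d-j+1,\,k-j+1)$, yields
\begin{equation*}
P_\nu(x;\ga)=Q(\bx_{j-1})\,c^{\,|\bnu^j|}\,P_{\bnu^j}(\bu;\bga^j),
\end{equation*}
where $Q(\bx_{j-1})=\prod_{k=1}^{j-1}(1-|\bx_{k-1}|)^{\nu_k}p_{\nu_k}^{(a_k,\ga_k)}\bigl(\tfrac{2x_k}{1-|\bx_{k-1}|}-1\bigr)$ involves only $x_1,\dots,x_{j-1}$ and $P_{\bnu^j}(\,\cdot\,;\bga^j)\in\cP^{\bga^j}_{|\bnu^j|}$ is the $(d-j+1)$-variable Jacobi polynomial. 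Since $\cM_j$ involves only $\pd_{x_j},\dots,\pd_{x_d}$, it commutes with multiplication by $Q(\bx_{j-1})$ and by $c^{\,|\bnu^j|}$, so the scaling reduction together with the already-established case $j=1$ applied in dimension $d-j+1$ gives
\begin{equation*}
\cM_j P_\nu=Q(\bx_{j-1})\,c^{\,|\bnu^j|}\,\bigl(\cM_{1,d-j+1}P_{\bnu^j}\bigr)(\bu)=-|\bnu^j|\bigl(|\bnu^j|+|\bga^j|+d+1-j\bigr)P_\nu,
\end{equation*}
which is \eqref{3.8}.

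The main obstacle I expect is the bookkeeping in the reduction step: verifying that each $t_{k,l}$ with $k,l\ge j$ is genuinely scale-invariant — the boundary operators $t_{k,d+1}$ really do require the identity $1-|x|=c(1-|\bu|)$ — and that the exponents $a_k$ and the powers $(1-|\bx_{k-1}|)^{\nu_k}$ reassemble precisely into $P_{\bnu^j}(\bu;\bga^j)$ with no leftover constants. By comparison, writing $\cM_1$ in explicit form and extracting the eigenvalue via top-degree homogeneous parts is routine; it could alternatively be quoted from \cite{DX}.
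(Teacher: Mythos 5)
Your proposal is correct and follows essentially the same route as the paper: the explicit second-order form of $\cM_{1,d}(x)$ plus its triangular (degree-non-raising) action on $\Rset[x]$, combined with the invariance of $\cP^{\ga}_n$ from \prref{pr3.1}, settles $j=1$, and the substitution $y_k=x_k/(1-|\bx_{j-1}|)$ reduces $\cM_{j,d}(x)$ to $\cM_{1,d+1-j}(y)$ with parameters $\bga^j$ for $j>1$. Your bookkeeping of the factor $c^{\,|\bnu^j|}$ and of the invariance of the exponents $a_k$ under the index shift is accurate, so no changes are needed.
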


\begin{proof}
First, note that the operator $\cM_{1,d}(x)=\sum_{1\leq k<l\leq d+1}t_{k,l}$ can be written as 
$$\cM_{1,d}(x)=\sum_{i=1}^{d}x_i(1-x_i)\pd_{x_i}^2-2\sum_{1\leq i<j\leq d}x_ix_j\pd_{x_i}\pd_{x_j}+\sum_{i=1}^{d}\left(\ga_i+1-(|\ga|+d+1)x_i\right)\pd_{x_i}.$$
From this formula, it is easy to see that $\cM_{1,d}(x)$ has a triangular action on $\Rset[x]$ with respect to the total degree as follows
\begin{equation}\label{3.9}
\cM_{1,d}(x)x^{\nu}=-|\nu|(|\nu|+|\ga|+d)x^{\nu}\mod \Pset_{|\nu|-1}.
\end{equation}
This combined with \prref{pr3.1} shows that
\begin{equation}\label{3.10}
\cM_{1,d}(x)P_\nu(x;\ga)=-|\nu|(|\nu|+|\ga|+d)P_\nu(x;\ga),
\end{equation}
thus establishing equation \eqref{3.8} when $j=1$. Fix now $j>1$ and note that the operator $\cM_{j,d}(x)$ contains no derivatives with respect to $x_1,\dots,x_{j-1}$. Therefore, the first $(j-1)$ terms in the product in equation \eqref{3.4} commute with $\cM_{j,d}(x)$.  If we introduce new variables $y_j,y_{j+1},\dots,y_{d}$ by 
\begin{equation*}
y_k=\frac{x_k}{1-|\bx_{j-1}|},\qquad\text{ for }\qquad k=j,j+1,\dots,d,
\end{equation*}
then one can check that 
\begin{equation*}
\cM_{j,d}(x)=\cM_{1,d+1-j}(y),
\end{equation*}
and 
\begin{equation*}
\prod_{k=j}^d \left(1-|\bx_{k-1}| \right)^{\nu_k} 
               p_{\nu_k}^{(a_k,\ga_k)}\left (\frac{2x_k}{1-|\bx_{k-1}|} -1\right)=P_{\bnu^j} (y;\bga^{j}).
\end{equation*}
The proof of \eqref{3.8} now follows from \eqref{3.10}.
\end{proof}

\section{Racah operators}\label{se4}
Consider variables $z_1,z_2,\dots$ and parameters $\be_0,\be_1,\dots$. We work below with functions and operators involving only a finite number of the variables $z_i$ and the parameters $\be_j$, but it will be convenient to use semi-infinite vectors by setting $z=(z_1,z_2,\dots)$ and $\be=(\be_0,\be_1,\dots)$. Extending the convention in the previous section for $j\in\Nset$ we have $\bz_j=(z_1,\dots,z_j)$ and $\bbe_j=(\be_0,\be_1,\dots,\be_j)$. From now on, we adopt the convention that any finite-dimensional vector can also be considered as a semi-infinite vector by adding zeros after the last component.

We denote by $\Rset(z)$ the field of rational functions of finitely many of the $z_j$'s and for $k\in\Nset$ we define an involution $I_k$ on $\Rset(z)$, by 
\begin{equation}\label{4.1}
I_k(z_k)=-z_k-\be_k \text{ and }I_k(z_j)=z_j \text{ for }j\neq k.
\end{equation}
For $k\in\Nset$ we denote by $E_{z_k}$ the forward shift operator acting on the variable $z_k$. Explicitly, if $f(z)\in\Rset(z)$ then 
\begin{equation*}
E_{z_k}f(z_1,z_2,\dots,z_{k-1},z_{k},z_{k+1},\dots)=f(z_1,z_2,\dots,z_{k-1},z_{k}+1,z_{k+1},\dots),
\end{equation*}
and its inverse  $E_{z_k}^{-1}$ corresponds to the backward shift in the variable $z_k$:
\begin{equation*}
E_{z_k}^{-1}f(z_1,z_2,\dots,z_{k-1},z_{k},z_{k+1},\dots)=f(z_1,z_2,\dots,z_{k-1},z_{k}-1,z_{k+1},\dots).
\end{equation*}
Let $\Zset^{\infty}=\{(\nu_1,\nu_2,\dots): \nu_j\neq 0 \text{ for finitely many }j\}$ be the additive group consisting of semi-infinite vectors having finitely many nonzero integer entries. Note that for $\nu\in\Zset^{\infty}$ we have a well-defined shift operator
$$E_z^{\nu}=E_{z_1}^{\nu_1}E_{z_2}^{\nu_2}E_{z_3}^{\nu_3}\cdots,$$
since the right-hand side has only finitely many terms different from the identity operator.
We denote by $\cDz$ the associative algebra of difference operators of the form 
$$L=\sum_{\nu\in S}l_\nu(z)E_z^{\nu},$$
where $S$ is a finite subset of $\Zset^{\infty}$ and $l_{\nu}(z)\in\Rset(z)$. The involution $I_k$ can be extended to an involution on $\cDz$ by defining 
\begin{equation}\label{4.2}
I_k(E_{z_k})=E_{z_k}^{-1} \text{ and }I_k(E_{z_j})=E_{z_j} \text{ for }j\neq k.
\end{equation}
We say that an operator $L\in\cDz$ is $I$-invariant, if it is invariant under the action of all involutions $I_k$, $k\in\Nset$. 

Next, we define a commutative subalgebra of $\cDz$ consisting of $I$-invariant operators, which will refer to as {\em the Racah operators}. 
For $i\in\Nset_0$ and $(j,k)\in\{0,1\}^2$ we define $B_{i}^{j,k}$
as follows
\begin{subequations}\label{4.3}
\begin{align}
B_i^{0,0}&=z_{i}(z_{i}+\be_{i})+z_{i+1}(z_{i+1}+\be_{i+1})+\frac{(\be_{i}+1)(\be_{i+1}-1)}{2},\label{4.3a}\\
B_i^{0,1}&=(z_{i+1}+z_{i}+\be_{i+1})(z_{i+1}-z_{i}+\be_{i+1}-\be_{i}),\label{4.3b}\\
B_i^{1,0}&=(z_{i+1}-z_{i})(z_{i+1}+z_{i}+\be_{i+1}),\label{4.3c}\\
B_i^{1,1}&=(z_{i+1}+z_{i}+\be_{i+1})(z_{i+1}+z_{i}+\be_{i+1}+1), \label{4.3d}
\end{align}
\end{subequations}
where $z_0=0$.
For $i\in\Nset$ we denote
\begin{subequations}\label{4.4}
\begin{align}
b_i^{0}&=\frac{(2z_{i}+\be_i+1)(2z_{i}+\be_i-1)}{2},\label{4.4a}\\
b_i^{1}&=(2z_{i}+\be_i+1)(2z_{i}+\be_i).\label{4.4b}
\end{align}
\end{subequations}
Using the above notations, for $j\in\Nset$ and $\nu\in\{0,1\}^j$ we define 
\begin{subequations}\label{4.5}
\begin{equation}\label{4.5a}
C_{j,\nu}(z)=\frac{\prod_{k=0}^{j}B_k^{{\nu}_k,{\nu}_{k+1}}}
{\prod_{k=1}^{j}b_k^{{\nu}_k}}, 
\end{equation}
where $\nu_0=\nu_{j+1}=0$. We extend the definition of $C_{j,\nu}$ for $\nu\in\{-1,0,1\}^j$ using the involutions $I_k$ as follows. 
Every $\nu\in\{-1,0,1\}^j$ can be decomposed as $\nu=\nu^{+}-\nu^{-}$, 
where $\nu^{\pm}\in \{0,1\}^j$ with components $\nu_k^{+}=\max(\nu_k,0)$ and 
$\nu_k^{-}=-\min(\nu_k,0)$.
For $\nu\in\{-1,0,1\}^j\setminus \{0,1\}^j$ we 
define 
\begin{equation}\label{4.5b}
C_{j,\nu}(z)=I^{\nu^{-}}(C_{\nu^{+}+\nu^{-}}(z)),
\end{equation}
\end{subequations}
where $I^{\nu^{-}}$ is the composition of the involutions corresponding to the positive coordinates of $\nu^{-}$.
Finally, for $j\in\Nset$ we define 
\begin{equation}\label{4.6}
\cL_j(z;\be)=\sum_{\nu\in\{-1,0,1\}^j}C_{j,\nu}(z)E_z^{\nu}
-\left(z_{j+1}(z_{j+1}+\be_{j+1})+\frac{(\be_{0}+1)(\be_{j+1}-1)}{2}\right).
\end{equation}
Note that $\cL_j(z;\be)$ is an $I$-invariant difference operator in the variables $\bz_{j}=(z_1,\dots,z_{j})$ 
with coefficients depending rationally on $\bz_{j+1}=(z_1,\dots,z_{j+1})$ and $\bbe_{j+1}=(\be_0,\be_1,\dots,\be_{j+1})$.
These operators commute with each other, i.e.
$$[\cL_j(z;\be),\cL_k(z;\be)]=0,$$
see \cite[Section 3]{GI}.
If we think of $z_{k+1}$ and $\be$ as parameters and we consider the space of polynomials $\Rset[w_1,\dots,w_k]$, where 
\begin{equation}\label{4.7}
w_s=w_s(z;\be)=z_s(z_s+\be_s),
\end{equation}
then using the $I$-invariance one can show that 
\begin{equation*}
\cL_j(z;\be):\Rset[w_1,\dots,w_k]\to \Rset[w_1,\dots,w_k] \qquad\text{ for }j=1,2,\dots,k.
\end{equation*}
Moreover, the operators $\cL_j(z;\be)$, $j=1,\dots,k$ can be simultaneously diagonalized on $\Rset[w_1,\dots,w_k]$ by the multivariable Racah polynomials defined by Tratnik in \cite{Tr}.
Explicitly, if we define for $\nu\in\Nset_0^{k}$ polynomials by 
\begin{align*}
  &R_k(\nu;z;\beta)=   \prod_{j=1}^k (2 |\bnu_{j-1}| +\be_j-\be_0)_{\nu_j} 
   ( |\bnu_{j-1}|+\be_{j+1}+z_{j+1})_{\nu_j}( |\bnu_{j-1}|-z_{j+1})_{\nu_j} \\
      & \qquad \quad \times  
       {}_4F_3 \left (\begin{matrix} -\nu_j, \nu_j+  2 |\bnu_{j-1}| +\be_{j+1}-\be_0-1,   |\bnu_{j-1}|-x_j,   |\bnu_{j-1}| +\be_j+z_j\\
      2 |\bnu_{j-1}| +\be_j-\be_0, |\bnu_{j-1}|+\be_{j+1}+z_{j+1},  |\bnu_{j-1}|-z_{j+1} \end{matrix}; 1 \right),
\end{align*}
where $\nu_0=0$, then one can show that $R_k(\nu;z;\beta)\in \Rset[w_1,\dots,w_k]$ and 
\begin{equation}\label{4.8}
\cL_j(z;\be)R_k(\nu;z;\beta)=\la_j(|\bnu_j|;\be)R_k(\nu;z;\beta), \text{ for }j=1,\dots,k,
\end{equation}
where
\begin{equation}\label{4.9}
\la_j(s;\be)=-s(s+\be_{j+1}-\be_{0}-1), \text{ for }j=1,\dots,k,
\end{equation}
see \cite[Theorem~3.9]{GI}. If $z_{k+1}=N\in\Nset$, then we consider the above polynomials for $|\nu|\leq N$ and they are mutually orthogonal on the set $V_{k+1,N}=\{z\in\Nset_0^{k+1}:0\le z_1\leq z_2\leq \cdots\leq z_k\leq z_{k+1}=N\}$ with respect to the weight
\begin{equation}\label{4.10}
\rho_k(z;\beta)= \prod_{j=0}^k \frac{(\be_{j+1}-\be_j)_{z_{j+1}-z_j}(\be_{j+1})_{z_{j+1}+z_j}}
{(z_{j+1}-z_j)! (\be_{j}+1)_{z_{j+1}+z_j}} \prod_{j=1}^k \frac{((\be_j+2)/2)_{z_j}}{(\be_j/2)_{z_j}}
\end{equation}
with norms given by 
\begin{align*}
&||R_k(\nu;\cdot;\beta)||^2= \sum_{z\in V_{k+1,N}} \rho_k(z;\beta)R^2_{k}(\nu;z;\beta) \\
&=\frac{(\be_{k+1})_{N+|\nu|} (-N)_{|\nu|}(-N-\be_0)_{|\nu|} (2|\nu|+\be_{k+1}-\be_{0})_{N-|\nu|}}{N!\,(\be_0+1)_N}\\ 
     & \times 
     \prod_{j=1}^{k} \nu_j!(\be_{j+1}-\be_{j})_{\nu_j}(2|\bnu_{j-1}|+\be_{j}-\be_{0})_{\nu_j}(|\bnu_{j}|+|\bnu_{j-1}|+\be_{j+1}-\be_{0}-1)_{\nu_j}.
\end{align*}
In this case, the orthonormal polynomials are defined by
$$\Rh_k(\nu;z;\beta)=\frac{1}{||R_k(\nu;\cdot;\beta)||} R_k(\nu;z;\beta) \qquad \text{ for } |\nu|\leq N.$$

Finally, we denote by 
\begin{equation}\label{4.11}
\cR_k(z;\be)=\Rset\langle w_1(z;\be),\dots,w_k(z;\be), \cL_1(z;\be),\dots,\cL_k(z;\be) \rangle,
\end{equation}
the associative algebra over $\Rset$ generated by the operators in \eqref{4.6} and the variables $w_j$ in \eqref{4.7} and we will refer to it as the multivariable Racah algebra.

\section{Symmetry algebra and Racah operators}\label{se5}

Note that the action of the symmetric group $S_{d+1}$ on $\td$ defined by $\tau (t_{i,j})=t_{\tau_i,\tau_j}$ for $\tau\in S_{d+1}$ corresponds to the simultaneous permutation of the variables $x=(x_1,\dots,x_{d+1})$ and the parameters $\ga=(\ga_1,\dots,\ga_{d+1})$, where $x_{d+1}=1-|x|$. Moreover, the Dirichlet distribution in \eqref{3.1} is invariant under this action.

In the rest of the paper we fix $\tau\in S_{d+1}$ to be the cyclic permutation
\begin{equation}\label{5.1}
\tau=(1,2,\dots,d,d+1),
\end{equation}
and we denote by 
\begin{equation}\label{5.2}
\fGd^{\tau}=\tau \circ \fGd
\end{equation}
the Gaudin subalgebra of $\td$ obtained by applying $\tau$ to the Gaudin subalgebra $\fGd$. Thus $\fGd^{\tau}$ is generated by $\cM^{\tau}_j=\tau\circ \cM_j$, $j=1,2\dots,d$. 
Next, let
\begin{equation}\label{5.3}
P^{\tau}_{\nu}(x;\ga)=\tau\circ P_{\nu}(x;\ga)=P_{\nu}(\tau\circ x;\tau\circ\ga)
\end{equation}
denote the orthogonal polynomials with respect to the Dirichlet distribution, obtained by applying $\tau$ to the orthogonal basis defined in \eqref{3.4}.
Finally, let 
$$\Ph_{\nu}(x;\ga)=\frac{1}{||P_{\nu}||}P_{\nu}(x;\ga)\qquad\text{ and }\qquad\Ph^{\tau}_{\nu}(x;\ga)=\frac{1}{||P^{\tau}_{\nu}||}P^{\tau}_{\nu}(x;\ga),$$ 
denote the orthonormal polynomials with respect to the inner product \eqref{3.2}, obtained by normalizing the polynomials $P_{\nu}(x;\ga)$ and $P^{\tau}_{\nu}(x;\ga)$, respectively. For $n\in\Nset_0$, the transition matrix between these two orthonormal bases of $\cP^{\ga}_n$ can be written explicitly in terms of the multivariable Racah polynomials. More precisely, define 
\begin{subequations}\label{5.4}
\begin{equation}\label{5.4a}
\beh_{j}=\ga_1+|\bga^{d+2-j}|+j\qquad\text{ for }j=0,1,\dots,d,
\end{equation}
and for $\nu,\mu \in \Nset_0^d$ let
\begin{equation}\label{5.4b}
\nuh=(|\bnu^d|,|\bnu^{d-1}|,\dots,|\bnu^{2}|,|\bnu^1|)\qquad \text{ and } \qquad 
\mub=(\mu_d,\mu_{d-1},\dots,\mu_2).
\end{equation}
\end{subequations}
\begin{subequations}\label{5.5}
For $\mu \in \Nset_0^d$ let
\begin{equation}\label{5.5a}
\mut=(|\bmu_1|,|\bmu_{2}|,\dots,|\bmu_{d-1}|,|\bmu_{d}|),
\end{equation}
and for $n\in\Nset_0$ we  define
\begin{equation}\label{5.5b}
\bet_0=\bet_0(n)=\ga_1, \qquad \bet_{j}(n)=-|\bga^{j+1}|-2n-d+j\qquad\text{ for }j=1,\dots,d.
\end{equation}
\end{subequations}
Then for $|\nu| = |\mu|=n$ we have
\begin{align}
\langle \Ph^{\tau}_{\nu}(x;\ga), \Ph_{\mu}(x;\ga)\rangle 
&=(-1)^n \sqrt{\rho_{d-1}(\nuh; \beh)}\,  \Rh_{d-1}(\mub;\nuh; \beh)\label{5.6}\\
&=(-1)^n\sqrt{\rho_{d-1}(\mut; \bet(n))}\,  \Rh_{d-1}(\bnu_{d-1};\mut; \bet(n)),\label{5.7}
\end{align}
see Section 6 in \cite{IX}.

\begin{Proposition}\label{pr5.1} 
For $\nu,\mu\in\Nset_0^{d}$ and $j=2,3,\dots,d$ we have 
\begin{align}
&\cM_{j,d}(x) P^{\tau}_{\nu}(x;\ga)=\cL_{d+1-j}(\nuh;\beh)P^{\tau}_{\nu}(x;\ga)\label{5.8}\\
&\cM^{\tau}_{j,d}(x) P_{\mu}(x;\ga)\nonumber\\
&\quad=\left(|\mu|(|\mu|+\bet_{j}(|\mu|)-\bet_0-1)+\frac{1}{g_{d}(\mu;\ga)}\cL_{j-1}(\mut;\bet(|\mu|))\circ g_{d}(\mu;\ga)\right)P_{\mu}(x;\ga),\label{5.9}
\end{align}
where  $(\beh,\nuh)$ and $(\bet,\mut)$ are defined in equations \eqref{5.4} and \eqref{5.5}, respectively, and
\begin{equation}\label{5.10}
g_{d}(\mu;\ga)=\frac{(1+\ga_1)_{\mu_1}}{(|\ga|+2|\mu|+d-\mu_1)_{\mu_1}}.
\end{equation}
\end{Proposition}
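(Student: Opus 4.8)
The plan is to deduce both \eqref{5.8} and \eqref{5.9} from one mechanism. Each of $\cM_{j,d}(x)$ and $\cM^\tau_{j,d}(x)=\tau\circ\cM_{j,d}(x)$ is self-adjoint for the inner product \eqref{3.2} (\prref{pr3.1}, together with the $\tau$-invariance of the Dirichlet distribution recorded at the beginning of this section), and each acts diagonally on one of the two bases: by \eqref{3.8}, $\cM_{j,d}(x)\Ph_\mu=-|\bmu^j|\bigl(|\bmu^j|+|\bga^j|+d+1-j\bigr)\Ph_\mu$, while applying $\tau$ to \eqref{3.8} (which sends $|\bga^j|$ to $|\bga^{j+1}|+\ga_1$) gives $\cM^\tau_{j,d}(x)\Ph^\tau_\nu=-|\bnu^j|\bigl(|\bnu^j|+|\bga^{j+1}|+\ga_1+d+1-j\bigr)\Ph^\tau_\nu$. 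Hence the matrix of $\cM_{j,d}(x)$ in the basis $\{\Ph^\tau_\nu\}_{|\nu|=n}$ is the conjugate of a diagonal matrix by the orthogonal transition matrix $\langle\Ph^\tau_\nu,\Ph_\mu\rangle$, and the latter is expressed through the orthonormal Racah polynomials by \eqref{5.6}--\eqref{5.7}; the difference equations \eqref{4.8} then identify the conjugate with the Racah operator $\cL$, up to the ``gauge'' introduced by passing between the $P$'s and the $\Rh$'s. By \coref{co3.2} both sides of \eqref{5.8} lie in $\cP^\ga_n$ when $|\nu|=n$ — the right-hand side because for $2\le j\le d$ the operator $\cL_{d+1-j}$ only shifts $\nuh_1,\dots,\nuh_{d+1-j}$ (note $d+1-j\le d-1$) and thus leaves $\nuh_d=|\nu|$ frozen — so it suffices to check \eqref{5.8} after pairing with each $\Ph_\mu$, $|\mu|=n$; similarly, \eqref{5.9} after pairing with each $\Ph^\tau_\nu$.

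For \eqref{5.8}: by self-adjointness, $\langle\cM_{j,d}(x)P^\tau_\nu,\Ph_\mu\rangle=-|\bmu^j|(|\bmu^j|+|\bga^j|+d+1-j)\langle P^\tau_\nu,\Ph_\mu\rangle$, while $\cL_{d+1-j}(\nuh;\beh)$ acts only on the index $\nu$ and therefore passes through the $x$-pairing; so \eqref{5.8} is equivalent to the scalar identity
\begin{equation*}
-|\bmu^j|\bigl(|\bmu^j|+|\bga^j|+d+1-j\bigr)\langle P^\tau_\nu,\Ph_\mu\rangle=\cL_{d+1-j}(\nuh;\beh)\langle P^\tau_\nu,\Ph_\mu\rangle
\end{equation*}
in the variable $\nu$ (equivalently $\nuh$). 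By \eqref{5.6}, $\langle P^\tau_\nu,\Ph_\mu\rangle=(-1)^n\|P^\tau_\nu\|\sqrt{\rho_{d-1}(\nuh;\beh)}\,\Rh_{d-1}(\mub;\nuh;\beh)$, and by \eqref{4.8}--\eqref{4.9}, $\cL_{d+1-j}(\nuh;\beh)\Rh_{d-1}(\mub;\nuh;\beh)=\lambda_{d+1-j}(|\bmu^j|;\beh)\Rh_{d-1}(\mub;\nuh;\beh)$, the sum of the first $d+1-j$ entries of $\mub$ being $|\bmu^j|$. It remains to note (i) the eigenvalue identity $\lambda_{d+1-j}(|\bmu^j|;\beh)=-|\bmu^j|(|\bmu^j|+|\bga^j|+d+1-j)$, which follows from \eqref{4.9} and \eqref{5.4a} since $\beh_0=\ga_1$ and $\beh_{d+2-j}=\ga_1+|\bga^j|+d+2-j$; and (ii) that the gauge factor $\nuh\mapsto\|P^\tau_\nu\|\sqrt{\rho_{d-1}(\nuh;\beh)}$ commutes through $\cL_{d+1-j}(\nuh;\beh)$, which holds once this factor is invariant under the shifts occurring in $\cL_{d+1-j}$ for all $j\in\{2,\dots,d\}$, i.e.\ once $\|P^\tau_\nu\|^2\rho_{d-1}(\nuh;\beh)$ depends only on $|\nu|$ — an identity between products of Pochhammer symbols built from \eqref{3.5} (with $\ga$ replaced by its cyclic shift) and \eqref{4.10}. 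Granting (ii), the displayed identity follows at once from (i).

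The proof of \eqref{5.9} runs in parallel, with $\{\Ph^\tau_\nu\}$ replacing $\{\Ph_\mu\}$ and \eqref{5.7} replacing \eqref{5.6}. Writing $e=-|\bnu^j|(|\bnu^j|+|\bga^{j+1}|+\ga_1+d+1-j)$, the claim reduces to
\begin{equation*}
e\,\langle P_\mu,\Ph^\tau_\nu\rangle=\Bigl(|\mu|(|\mu|+\bet_j(|\mu|)-\bet_0-1)+\tfrac{1}{g_d(\mu;\ga)}\cL_{j-1}(\mut;\bet(|\mu|))\circ g_d(\mu;\ga)\Bigr)\langle P_\mu,\Ph^\tau_\nu\rangle
\end{equation*}
as functions of $\mu$, where by \eqref{5.7} $\langle P_\mu,\Ph^\tau_\nu\rangle=(-1)^n\|P_\mu\|\sqrt{\rho_{d-1}(\mut;\bet(|\mu|))}\,\Rh_{d-1}(\bnu_{d-1};\mut;\bet(|\mu|))$, and by \eqref{4.8} $\cL_{j-1}(\mut;\bet(|\mu|))\Rh_{d-1}(\bnu_{d-1};\mut;\bet(|\mu|))=\lambda_{j-1}(|\bnu_{j-1}|;\bet(|\mu|))\Rh_{d-1}(\bnu_{d-1};\mut;\bet(|\mu|))$ (the parameter $\bet(|\mu|)$ is frozen since $\cL_{j-1}$ does not shift $\mut_d=|\mu|$). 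Two features distinguish this from \eqref{5.8}, both caused by the dependence of the parameters $\bet(\,\cdot\,)$ in \eqref{5.5b} on $|\mu|$. First, the eigenvalue identity holds only up to an additive constant: using $|\bnu_{j-1}|=|\mu|-|\bnu^j|$ together with $\bet_0=\ga_1$ and $\bet_j(|\mu|)=-|\bga^{j+1}|-2|\mu|-d+j$, a short computation gives $|\mu|(|\mu|+\bet_j(|\mu|)-\bet_0-1)+\lambda_{j-1}(|\bnu_{j-1}|;\bet(|\mu|))=e$, which accounts for the scalar term in \eqref{5.9}. Second, the gauge factor $\mu\mapsto\|P_\mu\|\sqrt{\rho_{d-1}(\mut;\bet(|\mu|))}$ is no longer shift-invariant, but $\mu\mapsto g_d(\mu;\ga)\|P_\mu\|\sqrt{\rho_{d-1}(\mut;\bet(|\mu|))}$ is — equivalently $g_d(\mu;\ga)^2\|P_\mu\|^2\rho_{d-1}(\mut;\bet(|\mu|))$ depends only on $|\mu|$ — and this is precisely what makes the conjugation by $g_d(\mu;\ga)$ in \eqref{5.9} do its work: $g_d(\mu;\ga)\|P_\mu\|\sqrt{\rho_{d-1}(\mut;\bet(|\mu|))}$ commutes through $\cL_{j-1}(\mut;\bet(|\mu|))$, leaving the eigenvalue $\lambda_{j-1}(|\bnu_{j-1}|;\bet(|\mu|))$, and adding the scalar term produces $e$.

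The crux of the proof is the gauge-transparency step: proving that $\|P^\tau_\nu\|^2\rho_{d-1}(\nuh;\beh)$, respectively $g_d(\mu;\ga)^2\|P_\mu\|^2\rho_{d-1}(\mut;\bet(|\mu|))$, becomes independent of $\nu$, respectively $\mu$, once the total degree is fixed. This is not a formal consequence of the unitarity of the transition matrix; it requires a genuine, if elementary, telescoping of the Pochhammer products coming from \eqref{3.5}, \eqref{4.10} and \eqref{5.10}. The same step is where one must keep the reindexings straight — in particular that the first $d+1-j$ entries of $\mub$ sum to $|\bmu^j|$ and the first $j-1$ entries of $\bnu_{d-1}$ sum to $|\bnu_{j-1}|$ — and where one uses that the Racah operators involved never disturb the final coordinates $\nuh_d=|\nu|$, $\mut_d=|\mu|$, so that in \eqref{5.9} the parameter vector $\bet(|\mu|)$ is genuinely constant under the relevant shifts. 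The eigenvalue identities and the reduction to scalar identities are routine.
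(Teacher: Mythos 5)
Your proof is correct and follows essentially the same route as the paper's: reduce \eqref{5.8} and \eqref{5.9} to scalar identities by pairing with the dual orthonormal basis, then combine self-adjointness, the spectral equations \eqref{3.8}, the transition formulas \eqref{5.6}--\eqref{5.7}, the Racah difference equations \eqref{4.8}, and the norm--weight identity $\|P^\tau_\nu\|^2\rho_{d-1}(\nuh;\beh)=1$ (which you state only as degree-dependence, but which the paper records as an exact identity). Your treatment of \eqref{5.9} is in fact more detailed than the paper's one-line remark, and your eigenvalue bookkeeping there checks out.
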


\begin{Remark}\label{re5.2}
The operator $\cL_{d+1-j}(\nuh;\beh)$ in \eqref{5.8} is a difference operator in the variables $\nu_{1},\dots,\nu_{d}$ obtained from the operator in \eqref{4.6} by changing the variables. Explicitly, we replace $z_l$ by $|\bnu^{d+1-l}|$ for $l=1,2,\dots,d$ in the coefficients and we replace $E_{z_l}$ by $E_{\nu_{d+1-l}}E_{\nu_{d-l}}^{-1}$ for $l=1,2,\dots,d-1$. The operator $\cL_{j-1}(\mut;\bet(|\mu|))$ in equation \eqref{5.9} is defined in a similar manner.
\end{Remark}

\begin{proof}[Proof of \prref{pr5.1}]
From \coref{co3.2} we know that $\cM_{j,d}(x) P^{\tau}_{\nu}(x;\ga)\in\cP^{\ga}_{|\nu|}$ and therefore 
\begin{equation}\label{5.11}
\cM_{j,d}(x) P^{\tau}_{\nu}(x;\ga)=L_{j}(\nu)P^{\tau}_{\nu}(x;\ga)
\end{equation}
for some difference operator $L_j(\nu)=\sum_{|s|=0}l_s(\nu;\ga)E_{\nu}^s$ acting on the indices $\nu$, with coefficients depending on $\nu$ and $\ga$. Thus, equation \eqref{5.11} is equivalent to the equations
\begin{equation}\label{5.12}
\langle \cM_{j,d}(x) P^{\tau}_{\nu}(x;\ga), \Ph_{\mu}(x;\ga)\rangle=\langle  L_{j}(\nu)P^{\tau}_{\nu}(x;\ga), \Ph_{\mu}(x;\ga) \rangle
\end{equation}
for all $\mu\in\Nset_0^{d}$ such that $|\mu|=|\nu|$.

Using \eqref{3.5} and  \eqref{4.10} one can deduce that $||P^{\tau}_{\nu}||^2\rho_{d-1}(\hat \nu; \beh)=1$.
Combining this with Propositions \ref{pr3.1}, \ref{pr3.3} and equations \eqref{4.9}, \eqref{5.4}, \eqref{5.6} we see that the left-hand side of \eqref{5.12} can be written as follows
\begin{align}
&\langle \cM_{j,d}(x) P^{\tau}_{\nu}(x;\ga), \Ph_{\mu}(x;\ga)\rangle=\langle P^{\tau}_{\nu}(x;\ga),  \cM_{j,d}(x)\Ph_{\mu}(x;\ga)\rangle \nonumber\\
&\qquad=-|\bmu^{j}|(|\bmu^{j}|+|\bga^{j}|+d+1-j)\langle P^{\tau}_{\nu}(x;\ga), \Ph_{\mu}(x;\ga)\rangle \nonumber\\
&\qquad=-|\bar{\bmu}_{d+1-j}|(|\bar{\bmu}_{d+1-j}|+\beh_{d+2-j}-\beh_0-1)||P^{\tau}_{\nu}|| \langle \Ph^{\tau}_{\nu}(x;\ga), \Ph_{\mu}(x;\ga)\rangle \nonumber\\
&\qquad=(-1)^{|\mu|}\la_{d+1-j}(|\bar{\bmu}_{d+1-j}|;\beh) ||P^{\tau}_{\nu}|| \sqrt{\rho_{d-1}(\hat \nu; \beh)}\,  \Rh_{d-1}(\mub;\nuh; \beh) \nonumber\\
&\qquad=(-1)^{|\mu|}\la_{d+1-j}(|\bar{\bmu}_{d+1-j}|;\beh)  \Rh_{d-1}(\mub;\nuh; \beh).\label{5.13}
\end{align}
For the right-hand side of  \eqref{5.12} we obtain
\begin{align}
\langle  L_{j}(\nu)P^{\tau}_{\nu}(x;\ga), \Ph_{\mu}(x;\ga) \rangle %\nonumber\\
%&\qquad
&= L_{j}(\nu) \langle  P^{\tau}_{\nu}(x;\ga), \Ph_{\mu}(x;\ga) \rangle \nonumber\\
&= L_{j}(\nu) ||P^{\tau}_{\nu}|| \langle \Ph^{\tau}_{\nu}(x;\ga), \Ph_{\mu}(x;\ga)\rangle \nonumber\\
&=(-1)^{|\mu|} L_{j}(\nu) ||P^{\tau}_{\nu}|| \sqrt{\rho_{d-1}(\hat \nu; \beh)}\,  \Rh_{d-1}(\mub;\nuh; \beh) \nonumber\\
&= (-1)^{|\mu|}L_{j}(\nu)  \Rh_{d-1}(\mub;\nuh; \beh).\label{5.14}
\end{align}
From equations \eqref{4.8}, \eqref{5.13} and \eqref{5.14} it is clear that the operator $L_{j}(\nu)$ must coincide with the Racah operator $\cL_{d+1-j}(\nuh;\beh)$, completing the proof of equation \eqref{5.8}. The proof of \eqref{5.9} follows along the same lines, using \eqref{5.7}.
\end{proof}

Summarizing all statements so far, we can formulate the main result of the paper, which gives explicit formulas for the action of the Gaudin algebras $\fGd$ and $\fGd^{\tau}$ on each of the bases $\{P_{\mu}(x;\ga):\mu\in\Nset_0^d, \;|\mu|=n\}$ and $\{P^{\tau}_{\nu}(x;\ga):\nu\in\Nset_0^d, \;|\nu|=n\}$ of $\cP^{\ga}_n$ in terms of the multivariable Racah algebra $\cR_{d-1}$ defined in \eqref{4.11}.

\begin{Theorem}\label{th5.3}
Let $n\in\Nset$. For $\mu\in\Nset_0^d$ such that $|\mu|=n$ and for $j\in\{2,\dots,d\}$ we have
\begin{subequations}\label{5.15}
\begin{align}
&\cM_{1,d}(x)P_{\mu}(x;\ga) = \cM_{1,d}^{\tau}(x)P_{\mu}(x;\ga)=-n(n+|\ga|+d)P_{\mu}(x;\ga), \label{5.15a}\\
&\cM_{j,d}(x)P_{\mu}(x;\ga) = \left(n(n+\bet_{j-1}(n))-w_{j-1}(\mut,\bet(n))\right) P_{\mu}(x;\ga),\label{5.15b}\\
&\cM^{\tau}_{j,d}(x) P_{\mu}(x;\ga)\nonumber\\
&\quad=\left(n(n+\bet_{j}(n)-\bet_0-1)+\frac{1}{g_{d}(\mu;\ga)}\cL_{j-1}(\mut;\bet(n))\circ g_{d}(\mu;\ga)\right)P_{\mu}(x;\ga),\label{5.15c}
\end{align}
\end{subequations}
where  $(\bet,\mut)$ and $g_d$ are defined in equations \eqref{5.5} and \eqref{5.10}, respectively. Likewise, for $\nu\in\Nset_0^d$ such that $|\nu|=n$ and for $j\in\{2,\dots,d\}$ we have
\begin{subequations}\label{5.16}
\begin{align}
&\cM^{\tau}_{1,d}(x)P^{\tau}_{\nu}(x;\ga) = \cM_{1,d}(x)P^{\tau}_{\nu}(x;\ga)=-n(n+|\ga|+d)P^{\tau}_{\nu}(x;\ga), \label{5.16a}\\
&\cM^{\tau}_{j,d}(x)P^{\tau}_{\nu}(x;\ga) = - w_{d+1-j}(\nuh;\beh) P^{\tau}_{\nu}(x;\ga),\label{5.16b}\\
&\cM_{j,d}(x) P^{\tau}_{\nu}(x;\ga)=\cL_{d+1-j}(\nuh;\beh)P^{\tau}_{\nu}(x;\ga),\label{5.16c}
\end{align}
\end{subequations}
where $\beh$ and $\nuh$ are defined in equations \eqref{5.4}.
\end{Theorem}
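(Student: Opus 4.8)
The plan is to recognize that \thref{th5.3} is a collation of \prref{pr3.3} and \prref{pr5.1}, glued together by one elementary observation: the operator $\cM_{1,d}(x)=\sum_{1\le k<l\le d+1}t_{k,l}$ is the sum of \emph{all} the generators $t_{k,l}$, hence it is fixed by the $S_{d+1}$-action, so $\cM^{\tau}_{1,d}(x)=\cM_{1,d}(x)$. The six identities then split into three types. The ``diagonal'' cases, in which a Gaudin operator acts on its own eigenbasis, namely \eqref{5.15a}, \eqref{5.15b}, \eqref{5.16a}, \eqref{5.16b}, all reduce to \prref{pr3.3} followed by a scalar rewriting. The ``off-diagonal'' cases \eqref{5.15c} and \eqref{5.16c} require no work at all: they are exactly equations \eqref{5.9} and \eqref{5.8} of \prref{pr5.1}, specialized to $|\mu|=n$ and $|\nu|=n$. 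For \eqref{5.15a} one uses \eqref{3.10} with $|\mu|=n$, and for \eqref{5.16a} one applies $\tau$ to \eqref{3.10} (the eigenvalue $-n(n+|\ga|+d)$ is unchanged since $|\ga|$ is permutation invariant), or equivalently notes, as in the proof of \prref{pr3.3}, that the self-adjoint triangular operator $\cM_{1,d}(x)$ acts on all of $\cP^{\ga}_n$ as the scalar $-n(n+|\ga|+d)$, in particular on $P^{\tau}_{\nu}\in\cP^{\ga}_n$.

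The only genuine computation left is to recast the remaining eigenvalues of \prref{pr3.3} in the Racah normalization. For \eqref{5.15b} I would start from $\cM_{j,d}(x)P_{\mu}=-|\bmu^{j}|(|\bmu^{j}|+|\bga^{j}|+d+1-j)P_{\mu}$; since $|\mu|=n$ we have $|\bmu^{j}|=n-|\bmu_{j-1}|$ and $|\bmu_{j-1}|=\mut_{j-1}$ by \eqref{5.5a}, so it suffices to verify the scalar identity
\[
-(n-\mut_{j-1})\bigl(n-\mut_{j-1}+|\bga^{j}|+d+1-j\bigr)=n\bigl(n+\bet_{j-1}(n)\bigr)-\mut_{j-1}\bigl(\mut_{j-1}+\bet_{j-1}(n)\bigr),
\]
which holds because $\bet_{j-1}(n)=-|\bga^{j}|-2n-(d+1-j)$ by \eqref{5.5b}, and whose right-hand side is $n(n+\bet_{j-1}(n))-w_{j-1}(\mut;\bet(n))$ by \eqref{4.7}. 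For \eqref{5.16b} I would apply $\tau$ to \prref{pr3.3}: since $P^{\tau}_{\nu}=\tau\circ P_{\nu}$ and $\cM^{\tau}_{j,d}(x)=\tau\circ\cM_{j,d}(x)$, the eigenvalue of \prref{pr3.3} gets evaluated at $\tau\circ\ga$, and for the cycle \eqref{5.1} the sum of the last $d+2-j$ components of $\tau\circ\ga$ equals $\ga_1+|\bga^{j+1}|$, so $\cM^{\tau}_{j,d}(x)P^{\tau}_{\nu}=-|\bnu^{j}|\bigl(|\bnu^{j}|+\ga_1+|\bga^{j+1}|+d+1-j\bigr)P^{\tau}_{\nu}$; comparing with \eqref{5.4a} and \eqref{5.4b}, namely $\nuh_{d+1-j}=|\bnu^{j}|$ and $\beh_{d+1-j}=\ga_1+|\bga^{j+1}|+d+1-j$, this eigenvalue is $-\nuh_{d+1-j}(\nuh_{d+1-j}+\beh_{d+1-j})=-w_{d+1-j}(\nuh;\beh)$, which is \eqref{5.16b}.

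I do not expect a real obstacle: the entire substance of the theorem already lives in Propositions \ref{pr3.3} and \ref{pr5.1}, and what remains is bookkeeping with the substitution $|\mu|=|\nu|=n$. The one place that calls for attention is \eqref{5.16b}, where one must correctly track the cyclic shift of the parameter vector $\ga$ under $\tau$ together with the reflection $j\mapsto d+1-j$ and $\nu\mapsto\nuh$ that matches the Gaudin-chain indexing with the Racah indexing of \seref{se4}; once these index identities are checked, together with the elementary scalar identity above, the theorem follows immediately from the earlier results.
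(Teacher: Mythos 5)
Your proposal is correct and matches the paper's intent exactly: the paper offers no separate proof of \thref{th5.3}, presenting it as a summary of \prref{pr3.3} and \prref{pr5.1}, and your collation together with the eigenvalue bookkeeping (the identity $(n-m)(n+m+\bet)=-(n-m)(n-m+A)$ for $\bet=-A-2n$, and the index matching $\nuh_{d+1-j}=|\bnu^{j}|$, $\beh_{d+1-j}=\ga_1+|\bga^{j+1}|+d+1-j$) supplies precisely the details the paper leaves implicit.
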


\begin{Remark}\label{re5.4}
When $d=2$, $\tau=(1,2,3)$ and we have $\cM_1=\cM_1^{\tau}=t_{1,2}+t_{1,3}+t_{2,3}$, $\cM_2=t_{2,3}$, $\cM^{\tau}_{2}=t_{1,3}$. Equivalently, we have $t_{1,2}=\cM_1-\cM_2-\cM^{\tau}_{2}$, $t_{1,3}=\cM^{\tau}_{2}$, $t_{2,3}=\cM_2$ and therefore the formulas in the above theorem give explicit formulas for the action of all elements of $\mathfrak{t}_3$ on the basis $\{P_{\mu}(x;\ga):\mu\in\Nset_0^d, \;|\mu|=n\}$ of $\cP^{\ga}_n$.

When $d=3$, $\tau=(1,2,3,4)$ and we can use equations \eqref{5.15} and \eqref{5.16c} to express the action of $\cM_{j}$, $\cM^{\tau}_j$ and $\cM^{\tau^{-1}}_j=\tau^{-1}\circ \cM_{j}$ for all $j\in\{1,2,3\}$ on the basis $\{P_{\mu}(x;\ga):\mu\in\Nset_0^d,\;|\mu|=n\}$ of $\cP^{\ga}_n$. Again, it is not hard to see that we can take appropriate linear combinations of these elements to obtain $t_{i,j}$ for all $1\leq i<j\leq 4$ and therefore we can write explicit formulas for the action of all elements of  $\mathfrak{t}_4$ in terms of the Racah operators using \thref{th5.3}. 

When $d>3$, the elements $\cM_j$, $\cM^{\tau}_j$ and $\cM^{\tau^{-1}}_j$ still generate $\td$ and therefore  \thref{th5.3} describes the action of all elements, but we need to use the nonlinear relation \eqref{2.3}.
\end{Remark}

\begin{Theorem}\label{th5.5}
We have
\begin{align}
t_{1,j}&=(\cM^{\tau}_{j-1}-\cM_{j})-(\cM^{\tau}_{j}-\cM_{j+1}), \text{ for }j=2,\dots,d+1,\label{5.17}\\
t_{i,d+1}&=(\cM_{i}-\cM^{\tau^{-1}}_{i+1})-(\cM_{i+1}-\cM^{\tau^{-1}}_{i+2}), \text{ for }i=1,\dots,d,\label{5.18}
\end{align}
with the convention that $\cM_{d+1}=\cM_{d+2}=0$. Moreover, the set
\begin{equation}\label{5.19}
\cSt=\{\cM_j:j=1,2,\dots,d\}\cup \{\cM^{\tau}_j:j=2,\dots,d\}\cup  \{\cM^{\tau^{-1}}_j:j=2,\dots,d\}
\end{equation} 
generates $\td$.
\end{Theorem}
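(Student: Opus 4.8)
The plan is to verify the identities \eqref{5.17} and \eqref{5.18} as relations in the Kohno-Drinfeld algebra $\td$ and then to extract the generation statement from \prref{pr2.3}. For \eqref{5.17}, recall that $\cM_k=\cM_{k,d}(x)=\sum_{k\le p<q\le d+1}t_{p,q}$, so the difference of consecutive Jucys--Murphy sums telescopes: $\cM_k-\cM_{k+1}=\sum_{q=k+1}^{d+1}t_{k,q}$. Applying the cyclic permutation $\tau=(1,2,\dots,d+1)$, which sends index $i$ to $i+1$ (mod $d+1$), one gets $\cM^{\tau}_{k-1}-\cM^{\tau}_k=\tau\circ(\cM_{k-1}-\cM_k)=\sum_{q=k}^{d+1}t_{\tau_{k-1},\tau_q}=\sum_{q=k}^{d+1}t_{k,\tau_q}$, where $\tau_q=q+1$ for $q\le d$ and $\tau_{d+1}=1$. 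First I would write out both $\cM^{\tau}_{j-1}-\cM_j$ and $\cM^{\tau}_{j}-\cM_{j+1}$ as explicit sums of $t_{p,q}$'s and check that subtracting them leaves exactly $t_{1,j}$; this is a bookkeeping computation with the index sets, and the appearance of the ``wrap-around'' term $t_{1,j}$ (coming from $\tau_{d+1}=1$) is precisely what makes the two tails differ by a single generator.

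For \eqref{5.18} I would run the analogous computation with $\tau^{-1}=(d+1,d,\dots,1)$, which sends $i$ to $i-1$ (mod $d+1$), so $\tau^{-1}_1=d+1$. Then $\cM^{\tau^{-1}}_i-\cM^{\tau^{-1}}_{i+1}=\tau^{-1}\circ(\cM_i-\cM_{i+1})=\sum_{q=i+1}^{d+1}t_{\tau^{-1}_i,\tau^{-1}_q}=\sum_{q=i+1}^{d+1}t_{i-1,\tau^{-1}_q}$ — again with one wrap-around term hitting index $d+1$. Forming $(\cM_i-\cM^{\tau^{-1}}_{i+1})-(\cM_{i+1}-\cM^{\tau^{-1}}_{i+2})$ and cancelling should isolate $t_{i,d+1}$. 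The boundary conventions $\cM_{d+1}=\cM_{d+2}=0$ are needed so that \eqref{5.17} at $j=d+1$ and \eqref{5.18} at $i=d$ reduce correctly; I would note that $\cM_{d+1,d}(x)=\sum_{\emptyset}t_{p,q}=0$ is consistent with the definition \eqref{3.7}.

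Finally, for the generation claim: by \prref{pr2.3} the algebra $\td$ is generated by $\cS=\{t_{1,j}:2\le j\le d+1\}\cup\{t_{i,d+1}:2\le i\le d\}$. Equations \eqref{5.17} and \eqref{5.18} express each element of $\cS$ as a $\Rset$-linear combination of elements of $\cSt$ (using $\cM^{\tau^{-1}}_1=\tau^{-1}\circ\cM_1$; but note $\cM_1=\cM_1^\tau=\cM_1^{\tau^{-1}}$ since it is the sum of \emph{all} $t_{p,q}$ and hence $S_{d+1}$-invariant, so $\cM^{\tau^{-1}}_1$ is already $\cM_1\in\cSt$, and likewise any $\cM^{\tau}_1$ occurring is just $\cM_1$). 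Hence $\cS\subset \Rset\langle\cSt\rangle$, which gives $\td=\Rset\langle\cS\rangle\subset\Rset\langle\cSt\rangle\subset\td$, forcing equality.

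I expect the main obstacle to be purely organizational: getting the index ranges in the telescoping sums and, especially, the wrap-around terms produced by $\tau$ and $\tau^{-1}$ exactly right, including the edge cases $j\in\{2,d+1\}$ and $i\in\{1,d\}$ where some $\cM$'s vanish by the convention. Once the combinatorics of the index sets is pinned down, the rest is immediate from \prref{pr2.3}; there is no analytic or representation-theoretic difficulty here, since the identities hold already at the level of the free Kohno--Drinfeld algebra (indeed at the level of formal sums of symbols $t_{p,q}$).
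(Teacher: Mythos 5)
Your proposal is correct and follows essentially the same route as the paper: both rest on the elementary index bookkeeping showing that $\cM_j^{\tau}=\cM_{j+1}+\sum_{k=j+1}^{d+1}t_{1,k}$ (the single wrap-around term $\tau_{d+1}=1$ being what produces the $t_{1,k}$'s), from which \eqref{5.17} follows by telescoping, \eqref{5.18} by applying $\tau^{-1}$, and the generation claim by observing that $\cSt$ then generates the set $\cS$ of \prref{pr2.3}. Your extra remark that $\cM_1=\cM_1^{\tau}=\cM_1^{\tau^{-1}}$ by $S_{d+1}$-invariance correctly handles the edge cases $j=2$ and $i=1$, which the paper leaves implicit.
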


\begin{proof}
From the definition of $\cM_j$ in equation \eqref{3.7} it is easy to see that
\begin{equation*}
\cM_j^{\tau}=\cM_{j+1}+\sum_{k=j+1}^{d+1}t_{1,k},
\end{equation*}
which gives \eqref{5.17}. Equation \eqref{5.18} follows by applying $\tau^{-1}$ to \eqref{5.17}. Equations \eqref{5.17} and \eqref{5.18} imply that the algebra generated by $\cSt$ contains the elements $\cS$ in \eqref{2.5} and therefore
the proof that $\cSt$ generates $\td$ follows from \prref{pr2.3}.
\end{proof}

Theorems \ref{th5.3} and \ref{5.5} together with equation \eqref{2.3} give explicit formulas for the action of all elements of $\td$ on the basis $\{P_{\mu}(x;\ga):\mu\in\Nset_0^d, \;|\mu|=n\}$ of $\cP^{\ga}_n$.

\begin{Remark}[Connection to bispectrality]\label{re5.6}
Fix $k\in\Nset$, $z_{k+1}=N\in\Nset$ and consider the Racah polynomials $R_k(\nu;z;\be)$ defined in the previous section. Besides the difference equations \eqref{4.8} in the variables $z$, they satisfy also difference equations in the indices $\nu$. More precisely, we can construct a second family $\{\cB_j(\nu;\be)\}_{j=1,2,\dots,k}$ of commuting partial difference operators in $\nu$, which are independent of $z_1,\dots,z_{k}$, such that 
\begin{equation}\label{5.20}
\cB_j(\nu;\be)R_k(\nu;z;\beta)=\ka_j(z;\be)R_k(\nu;z;\beta), \text{ for }j=1,\dots,k,
\end{equation}
where the eigenvalues $\ka_j(z;\be)$ are independent of $\nu$, see Section 4 in \cite{GI} for details.
In view of the work of Duistermaat and Gr\"unbaum \cite{DG}, we refer to equations \eqref{4.8} and \eqref{5.20} as {\em bispectral equations\/} for the Racah polynomials. 
Note that equations~\eqref{5.8} are essentially equivalent to the spectral equations \eqref{4.8}, upon using \eqref{5.6}. More precisely, the spectral equations \eqref{4.8} with $k=d-1$ are equivalent to equations \eqref{5.8} if we use the fact that $\fGd$ acts diagonally on the basis  $\{\hat{P}_{\mu}(x;\ga)\}$ combined with the identity 
\begin{equation}\label{5.21}
\langle \cN \hat{P}^{\tau}_{\nu}(x;\ga), \Ph_{\mu}(x;\ga)\rangle=\langle \hat{P}^{\tau}_{\nu}(x;\ga),  \cN\Ph_{\mu}(x;\ga) \rangle \qquad \text{ for }\cN\in\fGd,
\end{equation}
and formula \eqref{5.6}. Similarly, we can derive the spectral equations \eqref{5.20} for $k=d-1$ by using equation \eqref{5.9}, the fact that $\fGd^{\tau}$ acts diagonally on the basis  $\{\hat{P}^{\tau}_{\nu}(x;\ga)\}$, the identity 
\begin{equation}\label{5.22}
\langle \cN^{\tau} \hat{P}^{\tau}_{\nu}(x;\ga), \Ph_{\mu}(x;\ga)\rangle=\langle \hat{P}^{\tau}_{\nu}(x;\ga),  \cN^{\tau}\Ph_{\mu}(x;\ga) \rangle \qquad \text{ for }\cN^{\tau}\in\fGd^{\tau},
\end{equation}
and formula \eqref{5.6}. Therefore, the bispectral algebras of difference operators in $z$ and $\nu$ are parametrized by the Gaudin subalgebras $\fGd$ and $\fGd^{\tau}$, respectively. There is an interesting parallel between the present constructions and the ones in \cite{I1}, where bispectral commutative algebras of partial difference operators were constructed for multivariable polynomials, orthogonal with respect to the multinomial distribution. The key ingredients there were specific representations of the Lie algebra $\sld$ and two Cartan subalgebras which parametrize the corresponding bispectral commutative algebras of partial difference operators, while here we use representations of the Kohno-Drinfeld Lie algebra with two Gaudin subalgebras. 
It would be interesting to extend the above results to the bispectral commutative algebras constructed in \cite{I2} for the multivariable $q$-Racah polynomials defined in \cite{GR}, and to relate them to an appropriate quantum integrable system. 
\end{Remark}

Finally, we note that while this paper was under review, an interesting link between the theory developed here and the Laplace-Dunkl operator for $\Zset_2^{n}$ appeared in \cite{DGVV}.

\end{document}